\titlespacing*{\section}{0em}{2em}{0em}
\titlespacing*{\subsection}{0em}{2em}{0em}
\titlespacing*{\subsubsection}{0em}{2em}{0em}
\setlist[itemize]{topsep=0pt}
\setlist[enumerate]{topsep=0pt}
\definecolor{linkcolor}{rgb}{0, 0, 0.54}
\DeclareMathOperator{\E}{\mathbb{E}} 
\newcommand{\R}{\mathbb{R}}
\newcommand{\N}{\mathbb{N}}
\newcommand{\blank}{\makebox[1ex]{\textbf{$\cdot$}}}
\newcommand\independent{\protect\mathpalette{\protect\independenT}{\perp}}
\def\independenT#1#2{\mathrel{\rlap{$#1#2$}\mkern2mu{#1#2}}}
\renewcommand{\phi}{\varphi}
\renewcommand{\epsilon}{\varepsilon}
\newcommand*\diff{\mathop{}\!\mathrm{d}}
\newcommand\bigO{\ensuremath{\mathcal{O}}}
\newcommand{\midd}{\; \middle|\;}
\newcommand{\1}{\mathds{1}}
\DeclareMathOperator*{\argmin}{\arg\!\min}
\newcommand{\data}{\ensuremath{\mathcal{D}}}
\newcommand{\sample}{\ensuremath{\mathcal{S}}}
\theoremstyle{plain} 
\newtheorem{theorem}{Theorem}
\numberwithin{theorem}{section}
\newtheorem*{theorem*}{Theorem}
\newtheorem{lemma}[theorem]{Lemma}
\newtheorem{proposition}[theorem]{Proposition}
\newtheorem{corollary}[theorem]{Corollary}
\theoremstyle{definition} 
\newtheorem*{assumption*}{Assumption}
\theoremstyle{remark}
\title{The joint survival super learner: A super learner for
  right-censored data}
\author[1]{Anders Munch}
\author[1]{Thomas A.~Gerds}
\affil[1]{Section of Biostatistics, University of Copenhagen}
\begin{document}

\maketitle

\begin{abstract}
  Risk prediction models are widely used to guide
  real-world decision-making in areas such as healthcare and
  economics, and they also play a key role in estimating nuisance
  parameters in semiparametric inference. The super learner is a
  machine learning framework that combines a library of prediction
  algorithms into a meta-learner using cross-validated loss. In the
  context of right-censored data, careful consideration must be given
  to both the choice of loss function and the estimation of expected
  loss.  Moreover, estimators such as inverse probability of censoring
  weighting require accurate modeling and an estimator of the
  censoring distribution. We propose a novel approach to super
  learning for survival analysis that jointly evaluates candidate
  learners for both the event-time distribution and the censoring
  distribution. Our method imposes no restrictions on the algorithms
  included in the library, accommodates competing risks, and does not
  rely on a single pre-specified estimator of the censoring
  distribution. We establish a finite-sample bound on the average
  price we pay for using cross-validation, and show that this price
  vanishes asymptotically, up to poly-logarithmic terms, provided that
  the size of the library does not grow faster than at a polynomial
  rate in the sample size. We demonstrate the practical utility of our
  method using prostate cancer data and compare it to existing super
  learner algorithms for survival analysis using synthesized data.
\end{abstract}

\textbf{Keywords}: competing risks, cross-validation, loss based estimation,
right-censored data, super learner

\section{Introduction}
\label{sec:introduction}

Accurately predicting risk from time-to-event data is a central
challenge in various research fields, such as epidemiology, economics,
and weather forecasting, with applications in clinical decision making
and policy interventions. For instance, in prostate cancer management,
clinicians often need to estimate a patient’s risk of disease
progression and mortality over time to make informed decisions about
treatment strategies such as active surveillance versus immediate
intervention. Reliable risk prediction models can help tailor care to
individual patients, avoid overtreatment, and allocate healthcare
resources more effectively. Super learning \citep{van2007super}, also
known as ensemble learning or stacked regression
\citep{wolpert1992stacked,breiman1996stacked}, provides a powerful
approach to this problem by combining multiple candidate prediction
models to reduce the risk of bias incurred by a single potentially
mispecified model. In survival analysis, a super learner may for example combine a
stack of Cox regression models with a stack of random survival forests
\citep[][Section 8.4]{gerds2021medical}. Such a strategy has recently
produced \textit{KDpredict} (\url{https://kdpredict.com/}) a model which
jointly predicts the risks of kidney failure and all-cause mortality
at multiple time horizons based on different sets of covariates
\citep{liu2024predicting}. To evaluate the prediction performance of
the learners, the super learner behind KDpredict uses inverse
probability of censoring weighting (IPCW), where the censoring
distribution is estimated under the restrictive assumption that it
does not depend on the covariates. This is a potential source of bias
which is difficult to overcome with the currently available methods.

In this article, we propose the {\it joint survival super learner}, a
new super learner designed to handle the specific challenges of
ensemble learning with right-censored data. The joint survival super
learner simultaneously learns prediction models for the event-time and
censoring distributions. The joint survival super learner is based on
a competing risks model for the observed data, in which censoring is
included as a state of its own, such that at any time it is known in
which state an individual is. We assume conditionally independent
censoring and exploit well-known relationships between the observed
data distribution on the one side and the partly unobserved
distributions of the event time and the censoring time on the
other. Learners for the event-time and censoring hazard functions are
then assessed using the integrated Brier score across all states of
the observed data. Our estimation framework thus naturally
incorporates competing risks, avoids restrictive assumptions on the
censoring distribution, and produces an estimator for the censoring
distribution. Our approach is also fully flexible with respect to the
choice of learners. The latter is in contrast to other proposals which
restrict the library of learners to specific model classes
\citep{polley2011-sl-cens,golmakani2020super}, as we discuss in more
detail in Section~\ref{sec:super-learning}.

To analyse the theoretical properties of the joint survival super
learner, we focus on the discrete super learner, which selects the
model in the library with the best estimated performance
\citep{van2007super}. We provide theoretical guarantees for the
performance of the joint survival super learner, and in particular
show that the discrete joint survival super learner is consistent
when the library of learners includes at least one consistent
learner. We also derive a finite-sample oracle inequality for the
discrete joint survival super learner. We demonstrate how to construct
a library of learners using common methods for survival analysis and
illustrate the use of the joint survival super learner using a
prostate cancer data set.

The article is organized as follows. We introduce our notation and
framework in Section~\ref{sec:framework}.
Section~\ref{sec:super-learning} introduces loss-based super learning
and discusses other existing super learners for right-censored
data. In Section~\ref{sec:joint-survival-super-learner} we define the joint
survival super learner, while Section~\ref{sec:theor-results-prop}
provides theoretical guarantees. Section~\ref{sec:numer-exper} reports
the results of a series of numerical experiments, and
Section~\ref{sec:real-data-appl} illustrates the method on prostate
cancer data. We conclude with a discussion in
Section~\ref{sec:discussion}. Proofs are collected in the
Appendix. Code and an implementation of the joint survival super
learner in R \citep{R} are available at
\url{https://github.com/amnudn/joint-survival-super-learner}.

\section{Notation and framework}
\label{sec:framework}

In a competing risks framework \citep{andersen2012statistical} with
\(J\) competing risks, let \( T\) be a time to event variable,
\(D\in\{1,2,\dots,J\}\) the cause of the event, and $X \in
\mathcal{X}$ a vector of baseline covariates taking values in a
bounded subset \( \mathcal{X} \subset \R^p \), \( p\in\N \). Let
$\tau< \infty$ be a fixed prediction horizon. We use \(\mathcal{Q} \)
to denote the collection of all probability measures on \( [0,\tau]
\times \{1,2,\dots,J\}\times \mathcal{X} \) such that \( (T, D, X) \sim Q \)
for some unknown \( Q \in \mathcal{Q} \). For \(j\in\{1,2,
\dots,J\}\), the cause-specific conditional cumulative hazard
functions \( \Lambda_{j} \colon [0, \tau] \times \mathcal{X}
\rightarrow \R_+ \) are defined as
\begin{equation*}
  \Lambda_{j}(t \mid x) = \int_0^t\frac{  Q(T \in \diff s, D=j \mid X=x )}{Q(T \geq s \mid X=x )}.
\end{equation*} For ease of presentation we assume from now on that
\(J=2\) and that the map \( t\mapsto \Lambda_j(t \mid x) \) is
continuous for all \( x \) and \( j \), however, all technical
arguments extend naturally to the general case
\citep{andersen2012statistical}.  The event-free survival function
conditional on covariates is given by
\begin{equation}
  \label{eq:surv-def}
  S(t \mid x)=\exp\left\{-\Lambda_{1}(t \mid x)-\Lambda_{2}(t \mid x)\right\}.
\end{equation}
Let \( \mathcal{M}_{\tau}\) denote the space of all conditional
cumulative hazard functions on \( [0,\tau] \times\mathcal{X}\). Any
distribution \( Q \in \mathcal{Q} \) can be characterized by
\begin{equation*}
  \label{eq:parametrizeQ}
  \begin{split}
    Q(\diff t,j,\diff x)=& \left\{S(t- \mid x)\Lambda_1(\diff t \mid x)H(\diff x)\right\}^{\1{\{j=1\}}}\\
                         &  \left\{S(t- \mid x)\Lambda_2(\diff t \mid x)H(\diff x)\right\}^{\1{\{j=2\}}},
  \end{split}
\end{equation*}
where \(\Lambda_{j} \in \mathcal{M}_{\tau}\) for \(j=1,2\) and \(H\) is the marginal
distribution of the covariates.

We consider the right-censored setting in which we observe \(O =
(\tilde{T},\tilde D, X)\), where $\tilde T = \min(T,C)$ for a
right-censoring time \(C\), $\Delta = \1{\{T \leq C\}}$, and \(\tilde
D=\Delta D\). Let \(\mathcal{P}\) denote a set of probability measures
on the sample space \(\sample = [0, \tau] \times \{0, 1, 2\}
\times \mathcal{X}\) such that \(O \sim P \) for some unknown \(P\in
\mathcal{P}\). We assume that the event times and the censoring times
are conditionally independent given covariates, \( T \independent C
\mid X \). This implies that any distribution \( P \in \mathcal{P} \)
is characterized by a distribution \( Q \in \mathcal{Q} \) and a
conditional cumulative hazard function for \( C \) given \( X \)
\citep[c.f.,][]{begun1983information,gill1997coarsening}. We use
\(\Gamma\in\mathcal{M}_{\tau}\) to denote the cumulative hazard
function of the conditional censoring distribution given
covariates. For ease of presentation we assume that \(t\mapsto
\Gamma(t \mid x) \) is continuous for all \( x \). We let
\((t,x)\mapsto G(t \mid x)=\exp\left\{-\Gamma(t \mid x)\right\}\)
denote the survival function of the conditional censoring
distribution. The distribution \( P \) is characterized by
\begin{equation}\label{eq:parametrizeP}
  \begin{split}
    P(\diff t, j, \diff x) =& \left\{G(t- \mid x)S(t- \mid x)\Lambda_1(\diff t \mid x)H(\diff x)\right\}^{\1{{\{j=1\}}}}\\
                            & \left\{G(t- \mid x)S(t- \mid x)\Lambda_2(\diff t \mid x)H(\diff x)\right\}^{\1{{\{j=2\}}}}\\
                            & \left\{G(t- \mid x)S(t- \mid x)\Gamma(\diff t \mid x)H(\diff x)\right\}^{\1{{\{j=0\}}}}\\
    = & \left\{G(t- \mid x)Q(\diff t,j,\diff x)\right\}^{\1{{\{j\ne 0\}}}}\\    
                            & \left\{G(t- \mid x)S(t- \mid x)\Gamma(\diff t \mid x)H(\diff x)\right\}^{\1{{\{j=0\}}}}.
  \end{split}
\end{equation}
Hence, we may write
\( \mathcal{P} = \{ P_{Q, \Gamma} : Q \in \mathcal{Q}, \Gamma \in
\mathcal{G} \} \) for some \( \mathcal{G} \subset \mathcal{M}_{\tau} \). We
also have \(H\)-almost everywhere
\begin{equation*}
P(\tilde T>t \mid X=x) = S(t \mid x)G(t \mid x) = \exp\left\{-\Lambda_{1}(t \mid x)-\Lambda_{2}(t \mid x)-\Gamma(t \mid x) \right\}.
\end{equation*} We assume that there exists \(\kappa<\infty\) such
that \(\Lambda_{j}(\tau- \mid x)<\kappa \), for \(j\in\{1,2\}\), and
\(\Gamma(\tau- \mid x)<\kappa\) for almost all \(x\in\mathcal
X\). This implies that \(G(\tau- \mid x)\) is bounded away
from zero for almost all \(x\in\mathcal X\).  Under these assumptions,
the conditional cumulative hazard functions \(\Lambda_{j}\) and
\(\Gamma\) can be identified from \(P\) by
\begin{align}
  \Lambda_{j}(t \mid x) &= \int_0^t\frac{  P(\tilde T \in \diff s, \tilde D=j \mid X=x )}{P(\tilde T \geq s \mid X=x )}, \label{eq:lambdaj}\\
  \Gamma(t \mid x) &= \int_0^t\frac{  P(\tilde T \in \diff s, \tilde D=0 \mid X=x )}{P(\tilde T \geq s \mid X=x )}\label{eq:gamma}.
\end{align}
Thus, we can consider $\Lambda_j$ and \(\Gamma\) as operators which map from
\( \mathcal{P} \) to \(\mathcal M_{\tau}\).

\section{Loss-based super learning}
\label{sec:super-learning}

Loss-based super learning requires a library of learners, a
cross-validation algorithm, and a loss function for evaluating
predictive performance on hold-out samples. Let \(
\data_n=\{O_i\}_{i=1}^n \in \sample^n \) be a data set of i.i.d.\
observations from \( P \in \mathcal{P} \), and $\mathcal{A}$ a
collection of candidate learners. Let \(\Theta\) be the parameter
space, which in our case is a class of functions representing
different models. Each learner \(a \in \mathcal{A}\) is a map \( a
\colon \sample^n \rightarrow \Theta \) which takes a data set as
input and returns an estimate $a(\data_n) \in \Theta$. Let \(L\colon
\Theta \times \sample \rightarrow \R_+\) be a loss function,
representing the performance of the model $\theta \in \Theta$ at the
observation \( O \in \sample \), where lower values mean better
performance.

The expected loss of a learner is estimated by splitting the data set
$\data_n$ into $K$ disjoint approximately equally sized subsets
\(\data_n^1, \data_n^2, \dots, \data_n^K \) and then calculating the
cross-validated loss
\begin{equation}
  \label{eq:cv-risk-est}
  \hat{R}_n(a; L) =
  \frac{1}{K}\sum_{k=1}^{K}
  \frac{1}{| \data_n^{k} |}\sum_{O_i \in \data_n^{k}}
  L
  {
    \left(
      a{ (\data_n^{-k})}
      , O_i
    \right)
  },
  \quad \text{with} \quad
  \data_n^{-k} = \data_n \setminus \data_n^{k}.
\end{equation}
The subset \(\data_n^{-k}\) is referred to as the \(k\)'th training
sample, while \(\data_n^{k}\) is referred to as the \(k\)'th test or
hold-out sample.
The discrete super learner is defined as
\begin{equation*}
\hat{a}_n = \argmin_{a\in\mathcal A}\hat{R}_n(a; L),
\end{equation*}
and depends on both the library of learners and the specific
partitioning of the data into cross-validation folds
\( \data_n^1, \dots, \data_n^K \).

When designing a super learner for right-censored data, particular
care must be taken in the choice of loss function and in the
estimation of the expected loss. A commonly used loss function for
right-censored data is the partial log-likelihood loss
\citep[e.g.,][]{li2016regularized,yao2017deep,lee2018deephit,katzman2018deepsurv,gensheimer2019scalable,lee2021boosted,kvamme2021continuous}.
This loss function is also recommended for super learning with
right-censored data by \cite{polley2011-sl-cens}, under the assumption
that data are observed in discrete time. However, the partial
log-likelihood loss does not work well as a general purpose measure of
performance in hold-out samples when data are observed in continuous
time. The reason is that the partial log-likelihood assigns an
infinite value to any learner that predicts piecewise constant
cumulative hazard functions, if the test set contains event times that
are not observed in the training set. This problem occurs with
prominent survival learners including the Kaplan-Meier estimator,
random survival forests, and semi-parametric Cox regression models,
and these learners cannot be included in the library of the super
learner proposed by \cite{polley2011-sl-cens}. When a proportional
hazards model is assumed, the baseline hazard function can be profiled
out of the likelihood \citep{cox1972regression}. The cross-validated
partial log-likelihood loss \citep{verweij1993cross} has therefore
been suggested as a loss function for super learning by
\cite{golmakani2020super}. However, this choice of loss function
restricts the library of learners to include only Cox proportional
hazards models, and hence excludes many learners such as, e.g., random
survival forests, additive hazards models, and accelerated failure
time models.

Alternative approaches for super learning with right-censored data use
an IPCW loss function
\citep{graf1999assessment,van2003unicv,molinaro2004tree,keles2004asymptotically,hothorn2006survival,gerds2006consistent,gonzalez2021stacked},
censoring unbiased transformations
\citep{fan1996local,steingrimsson2019censoring}, or pseudo-values
\citep{andersen2003generalised,mogensen2013random,sachs2019ensemble}.
All these methods rely on an estimator of the censoring distribution,
and their drawback is that this estimator has to be pre-specified.
Recent work by \cite{han2021inverse} and \cite{westling2021inference}
circumvents the need to pre-specify a censoring model by iterating
between estimation of the outcome and censoring models. However, this
iterative procedure is in general not guaranteed to converge to the
true data-generating mechanism
\citep[][Appendix~A.4]{munch2024thesis}.

\section{The joint survival super learner}
\label{sec:joint-survival-super-learner}

The main idea of the joint survival super learner is to specify
libraries of learners for the hazard functions \( \Lambda_1 \), \(
\Lambda_2 \), and \( \Gamma \), and to exploit the relations in
equation~(\ref{eq:parametrizeP}) to define a joint loss function. The
joint survival super learner thus evaluates a tuple of learners for \(
(\Lambda_1, \Lambda_2, \Gamma) \) based on how well they jointly
predict the observed data and the discrete joint survival super
learner chooses the best performing tuple. To formally introduce the
joint survival super learner, we define the process
\begin{equation*}
  \eta(t) = \1\{\tilde{T} \leq t, \tilde D=1\} + 2\,\1\{\tilde{T} \leq t, \tilde
  D=2\} - \1\{\tilde{T} \leq t, \tilde D=0\},
  \quad \text{for} \quad t \in [0, \tau],
\end{equation*}
which takes values in \( \{-1,0,1,2\}\). The four values represent
four mutually exclusive states. Specifically, value \( 0 \) represents
the state where the individual is still event-free and uncensored,
value \( 1\) the state where the event of interest has occurred and
was observed, value \( 2\) the state where a competing risk has
occurred and was observed, and value \( -1\) the state where the
observation is right-censored. The state occupation probabilities
given baseline covariates \( X \) are given by the function
\begin{equation}
  \label{eq:F-def}
  F(t, l, x) = P(\eta(t) = l \mid X=x),
\end{equation}
for all \( t \in [0,\tau] \), \( l \in \{-1,0,1,2\} \), and
\( x \in \mathcal{X} \).

Under conditional independent censoring, each tuple
$(\Lambda_{1}, \Lambda_{2}, \Gamma, H)$ characterizes a distribution
\(P\in\mathcal P\), c.f.\ equation~\eqref{eq:parametrizeP}, which in
turn determines \( (F, H) \). Hence, a learner for \( F \) can be
constructed from learners for \( \Lambda_1 \), \( \Lambda_2 \), and
$\Gamma$ as follows:
\begin{equation}\label{eq:transition}
  \begin{split}
    F(t, 0, x)
    &
      = P(\tilde{T} > t \mid X= x)
      =
      \exp{{\{-\Lambda_{1}(t \mid x)-\Lambda_{2}(t \mid x) - \Gamma(t \mid x)\}
      }},
    \\
    F(t, 1, x)
    &
      = P(\tilde{T} \leq t, \tilde{D}=1 \mid X=x)
      = \int_0^t F(s-, 0, x)  \Lambda_{1}(\diff s \mid x),
    \\
    F(t, 2, x)
    &
      = P(\tilde{T} \leq t, \tilde{D}=2 \mid X=x)
      = \int_0^t  F(s-, 0, x)  \Lambda_{2}(\diff s \mid x),
    \\
    F(t, -1, x)
    &
      = P(\tilde{T} \leq t, \tilde{D}=0 \mid X=x)
      = \int_0^t F(s-, 0, x)  \Gamma(\diff s \mid x).
  \end{split}
\end{equation}
Equation~\eqref{eq:transition} implies that a library for \( F \) can
by build from three libraries of learners: \(\mathcal{A}_1\),
\( \mathcal{A}_2 \), and \( \mathcal{B} \), where \(\mathcal{A}_1\)
and \( \mathcal{A}_2\) contain learners for the conditional
cause-specific cumulative hazard functions \(\Lambda_1\) and
\( \Lambda_2\), respectively, and \(\mathcal{B}\) contains learners
for the conditional cumulative hazard function of the censoring
distribution.  Taking the Cartesian product of these libraries, we
obtain a library $\Phi$ of learners for \( F \):
\begin{equation}
  \label{eq:jssl-lib-def}
  \Phi(\mathcal{A}_1, \mathcal{A}_2, \mathcal{B})
  = \{ \phi_{a_1,a_2, b} : a_1 \in \mathcal{A}_1, a_2 \in \mathcal{A}_2, b \in \mathcal{B}\},
\end{equation}
where in correspondence with the relations in equation
\eqref{eq:transition},
\begin{equation}
  \begin{split}\label{eq:anti-transition}
    \phi_{a_1,a_2, b}(\data_n)(t,0,x)
  &= \exp{\{-a_1(\data_n)(s \mid x)-a_2(\data_n)(s \mid x) - b(\data_n)(s \mid
    x)\} },
  \\
  \phi_{a_1,a_2, b}(\data_n)(t,1,x)
  &= \int_0^t
    \phi_{a_1,a_2, b}(\data_n)(s-,0,x)  a_1(\data_n)(\diff s \mid x),
  \\
  \phi_{a_1,a_2, b}(\data_n)(t,2,x)
  &= \int_0^t\phi_{a_1,a_2, b}(\data_n)(s-,0,x)  a_2(\data_n)(\diff s \mid x),
  \\
  \phi_{a_1,a_2, b}(\data_n)(t,-1,x)
  &= \int_0^t \phi_{a_1,a_2, b}(\data_n)(s-,0,x)  b(\data_n)(\diff s \mid x).
  \end{split}
\end{equation}
  Notably, the libraries \( \mathcal{A}_1 \), \(
\mathcal{A}_2 \), and \( \mathcal{B} \) can be constructed using
standard software for survival analysis.  For example, in the
\texttt{R} software we can specify various ways to include covariates
in a Cox regression model and fit learners of the hazard functions
using the \texttt{survival}-package \citep{survival-package}, and we
can specify hyper parameters of a random survival forest and derive
learners of the hazard functions using the
\texttt{randomForestSRC}-package \citep{randomForestSRC}.

To evaluate how well a function \( F \) predicts the
process $\eta$ we use the integrated Brier score \citep{graf1999assessment}
evaulated at time \(\tau\):
\begin{equation*}
  \bar B_\tau(F,O) = \int_0^{\tau} \sum_{l=-1}^{2}
  \left(
      F(t,l,X) - \1{\{\eta(t)=l\}}
  \right)^2\diff t.
\end{equation*}
Here, the integrand is the average Brier score at time \(t\) across
the four states \citep{brier1950verification}. Based on a split of a
data set \(\data_n\) into $K$ disjoint approximately equally sized
subsets (c.f., Section \ref{sec:super-learning}), each learner \(
\phi_{a_1, a_2, b} \) in the library \( \Phi(\mathcal{A}_1,
\mathcal{A}_2, \mathcal{B}) \) is evaluated using the cross-validated
loss,
\begin{equation*}
  \hat{R}_{n}(\phi_{a_1,a_2,b} ; \bar{B}_{\tau}) =
  \frac{1}{K}\sum_{k=1}^{K}
  \frac{1}{| \data_n^{k} |}\sum_{O_i \in \data_n^{k}}
  \bar B_\tau
  {
    \left(
      \phi_{a_1,a_2,b}{ (\data_n^{-k})}
      , O_i
    \right)
  },
\end{equation*}
and the discrete joint survival super learner is the best performing tuple of hazard functions:
\begin{equation}\label{eq:discrete-JSLL}
  (\hat \Lambda_{1n},\hat \Lambda_{2n}, \hat \Gamma_{n})
  =  \argmin_{(a_1,a_2,b)\in \mathcal{A}_1\times\mathcal{A}_2\times\mathcal{B}}
    \hat{R}_{n}(\phi_{a_1,a_2,b} ; \bar{B}_{\tau}).
\end{equation}
  
Cause-specific risk predictions can be obtained from
the joint survival super learner \eqref{eq:discrete-JSLL} by
substituting into the well-known formula
\citep[e.g.,][]{benichou1990estimates, ozenne2017riskregression},
\begin{equation}
  \label{eq:cs-risk-def} \hat Q_n(T \leq t, D = j \mid X=x) = \int_0^t
\exp\left\{-\hat\Lambda_{1n}(u \mid x)-\hat\Lambda_{2n}(u \mid
x)\right\} \hat\Lambda_{jn}(\diff u \mid x), \quad j \in \{1,2\}.
\end{equation} Furthermore, the joint survival super learner provides an
estimator of the censoring distribution:
\begin{equation*}
 \hat G_n(T \leq t \mid X=x) = \exp\left\{-\hat\Gamma_n(t \mid x)\right\}.
\end{equation*}

\section{Theoretical guarantees}
\label{sec:theor-results-prop}

Cross-validation is the backbone of super learning and an intuitively
reasonable procedure for fair model selection without overfitting. In
this section, we adapt the work of \cite{van2003unicv} and
\cite{vaart2006oracle} and provide a theoretical justification for the
joint survival super learner in the form of a finite-sample oracle
inequality. We begin by demonstrating that minimizing the integrated
Brier score, as defined in
Section~\ref{sec:joint-survival-super-learner}, is statistically
proper, in the sense that minimization recovers the parameter of the
data-generating distribution. Together with our finite-sample oracle
inequality (Proposition~\ref{prop:oracle-prop} below), this implies
that the joint survival super learner is consistent when it is based
on a library that includes at least one consistent learner. Another
consequence of our finite-sample oracle inequality is that the joint
survival super learner converges (nearly) at the optimal rate
achievable within the library of learners. This statement is made
precise in Corollary~\ref{cor:asymp-cons} and the following
discussion. Proofs are deferred to the Appendix.

A sensible loss function should attain the minimal expected value at
the parameter corresponding to the data-generating distribution. Loss
functions with this property are called proper, and strictly proper if
the minimizer is unique \citep{gneiting2007strictly}. Absence of
properness makes it unclear why minimizing the (estimated) expected
loss is interesting.  Proposition~\ref{prop:stric-prop} states that
the integrated Brier score as defined in
Section~\ref{sec:joint-survival-super-learner} is a strictly proper
scoring rule. To establish this result, recall that the function \(F\)
implicitly depends on the data-generating probability measure
\(P\in\mathcal P\) but that this was so-far suppressed in the
notation. We now make this dependence explicit by writing \(F_P\) for
the function determined by a given \(P \in\mathcal{P}\) in accordance
with equation equation~(\ref{eq:F-def}). In the following we let \(
\mathcal{F}_{\mathcal{P}} = \{F_P : P \in \mathcal{P}\} \).

\begin{proposition} 
  \label{prop:stric-prop}
  If \(P \in\mathcal{P}\) then
  \begin{equation*}
    F_P = \argmin_{F \in \mathcal{F}_{\mathcal{P}}}
    \E_P{[\bar{B}_\tau(F, O)]}
    ,
  \end{equation*}
  for all \( l \in \{-1, 0, 1, 2 \} \), almost all
  \( t \in [0,\tau] \), and \( P \)-almost all
  \( x \in \mathcal{X} \).
\end{proposition}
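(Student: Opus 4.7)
The plan is to reduce the minimization over $\mathcal{H}_{\mathcal{P}}$ to a pointwise minimization in $(t,j,x)$ by invoking Tonelli's theorem and then conditioning on $X$. Since every term appearing in $B_t(F,O)$ is bounded by $1$, interchanging the integrals and the finite sum is immediate, and yields
\[
P_0{[\bar B_\tau(F,\blank)]}
= \int_0^\tau \int_{\mathcal{X}} \sum_{j=-1}^{2}
E_{P_0}\!{\left[ (F(t,j,x) - \1\{\eta(t)=j\})^2 \midd X=x \right]}
\, H(\diff x) \diff t.
\]

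For each fixed triple $(t,j,x)$ I expand the inner squared error. Using $\1\{\eta(t)=j\}^2 = \1\{\eta(t)=j\}$ together with $E_{P_0}[\1\{\eta(t)=j\}\mid X=x] = F_0(t,j,x)$, the conditional expectation equals
\[
F(t,j,x)^2 - 2\, F(t,j,x)\, F_0(t,j,x) + F_0(t,j,x)
= (F(t,j,x) - F_0(t,j,x))^2 + F_0(t,j,x)\,(1-F_0(t,j,x)).
\]
Only the first summand depends on $F$, and it is nonnegative with a unique minimum at $F(t,j,x) = F_0(t,j,x)$. This is the standard strict-propriety argument for the Brier score.

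Because $P_0 \in \mathcal{P}$, we have $F_0 \in \mathcal{H}_{\mathcal{P}}$, so the pointwise minimizer belongs to the admissible class. Subtracting the two integrated risks gives the clean identity
\[
P_0{[\bar B_\tau(F,\blank)]} - P_0{[\bar B_\tau(F_0,\blank)]}
= \int_0^\tau \int_{\mathcal{X}} \sum_{j=-1}^{2} (F(t,j,x) - F_0(t,j,x))^2 \, H(\diff x) \diff t \; \geq \; 0,
\]
with equality precisely when $F(t,j,x) = F_0(t,j,x)$ for $H$-almost every $x$, almost every $t \in [0,\tau]$, and every $j \in \{-1,0,1,2\}$. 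This is the content of the proposition. There is no genuine obstacle: the only technical point is the boundedness needed for Tonelli, which holds automatically since $F,F_0 \in [0,1]$, and the proposition is essentially a direct consequence of strict propriety of the Brier score as cited from \citet{gneiting2007strictly}.
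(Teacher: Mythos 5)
Your proof is correct and follows essentially the same route as the paper: the paper's Lemma~1 establishes the identity $P_0[\bar B_\tau(F,\blank)] = \Vert F - F_0\Vert_{P_0}^2 + P_0[\bar B_\tau(F_0,\blank)]$ by expanding the square and killing the cross term with the tower property, which is exactly your conditional-expectation decomposition $(F-F_0)^2 + F_0(1-F_0)$ integrated over $(t,j,x)$. The conclusion about the a.e.\ uniqueness of the minimizer then follows identically in both arguments.
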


The discrete joint survival super learner defined in
\eqref{eq:discrete-JSLL} provides an estimate of the function \(F\)
\begin{equation*}
  \hat{\phi}_n=\phi_{\hat \Lambda_{1n},\hat \Lambda_{2n}, \hat \Gamma_{n}}
\end{equation*} which is obtained by substituting \((\hat
\Lambda_{1n},\hat \Lambda_{2n}, \hat \Gamma_{n})\) for \((a_1,a_2,b)\) into the structural
equations \eqref{eq:anti-transition}. To evaluate the performance of
\(\hat{\phi}_n\) we benchmark it against the data-generating
distribution \( F_P \), which according to
Proposition~\ref{prop:stric-prop} has the smallest expected
loss. Another useful theoretical benchmark is the so-called oracle
learner which is the best learner included in the library of learners
and formally defined by
\begin{equation*}
  \tilde{\phi}_n
  =  \argmin_{\phi \in \Phi(\mathcal{A}_1, \mathcal{A}_2, \mathcal{B}) }
  \tilde{R}_{n}(\phi ; \bar{B}_{\tau}),
  \quad \text{with} \quad 
  \tilde{R}_n(\phi; \bar{B}_{\tau})=
  \frac{1}{K}\sum_{k=1}^{K} 
  \E_P{
    \left[
      \bar{B}_{\tau}
      {
        \left(
          \phi{ (\data_n^{-k})}
          , O
        \right)
      } 
      \midd  \data_n^{-k}
    \right]}
  ,
\end{equation*}
where we use \( \E_P \) to denote the expectation
under the distribution \( P \) for a new observation \( O \) which is
independent of the data \( \data_n^{-k} \). Like the joint survival
super learner, the oracle learner depends on the library of learners
and on the actual partition of the data, but unlike the joint survival
super learner, it also depends on the unknown data-generating
distribution. It is hence not available in practice and serves only as
a theoretical benchmark.

In the following, we equip the space \( \mathcal{F}_{\mathcal{P}} \)
with the norm
\begin{equation}
  \label{eq:norm}
  \| F \|_{P} = 
  \left\{
    \sum_{l=-1}^{2}
    \int_0^{\tau} \E_P{\left[ F(t, l, X)^2 \right]} \diff t
  \right\}^{1/2}.
\end{equation}
This norm induces a natural performance measure because
$\| F-F_P \|_{P}$ is equal to the excess risk,
\( \E_P{[\bar{B}_\tau(F, O)]} - \E_P{[\bar{B}_\tau(F_P, O)]} \), as
shown in Lemma~\ref{lemma:norm} in the Appendix. For simplicity of
presentation, we assume that all folds of the data partition have
equal size, \( |\data_n^{-k}| = n/K \) for a fixed number of folds
\( K \). We allow the number of learners to grow with \( n \) and
write
\( \Phi_n=\Phi(\mathcal{A}_{1n}, \mathcal{A}_{2n},
\mathcal{B}_n)\) as short-hand notation emphasizing the dependence on
the sample size. We now state a finite-sample inequality that bounds
the performance of the joint survival super learner relative to that
of the oracle learner.

\begin{proposition}
  \label{prop:oracle-prop}
  For all \(P\in\mathcal{P}\), \( n \in \N \), \( k \in \{1, \dots, K\} \),
  and $\delta>0$,
  \begin{align*}
    \E_{P}{\left[ \Vert \hat{\phi}_n(\data_n^{-k}) - F_P \Vert_{P}^2 \right]}
    & \leq (1 + 2\delta)
      \E_{P}{\left[ \Vert \tilde{\phi}_n(\data_n^{-k}) - F_P \Vert_{P}^2 \right]}
    \\
    & \quad
      + (1+ \delta) 16   K \tau
      \left(
      13 + \frac{12}{\delta}
      \right)
      \frac{\log(1 + |\Phi_n|)}{n}.
  \end{align*}
\end{proposition}

The expectation in Proposition~\ref{prop:oracle-prop} is
taken with respect to the product measure \( P^{n} \) for the data set
\( \data_n \). This means that we are quantifying the average
performance of the joint survival super learner across all training
data of size \(n\). A corresponding quantity was called the expected
true error rate in \cite{efron_and_tibshirani97}. As with many
finite-sample oracle inequalities, this result is of little direct
practical utility because the right-hand side depends on
data-dependent, unknown quantities. However, it does quantify how the
number of folds, the time horizon, and the number of learners in the
library can be expected to influence the performance. The result has
the following asymptotic consequences. 

\begin{corollary}
  \label{cor:asymp-cons}
  Assume that \( |\Phi_n| = \bigO(n^q)\), for some
  \( q \in \N \) and that there exists a sequence
  \( \phi_n \in \Phi_n \), \( n \in \N \), such that
  \(  \E_{P}{\left[ \Vert
      \phi_n(\data_n^{-k}) - F_P \Vert_{P}^2 \right]} = C_P +
  \bigO(n^{-\alpha}) \), for some \( \alpha\leq 1 \) and
  \( C_P \geq 0 \).
  \begin{enumerate}[label=(\alph*)]
  \item\label{item:1} If $\alpha=1$, then
    \(
    \E_{P}{\left[ \Vert
        \hat{\phi}_n(\data_n^{-k}) - F_P \Vert_{P}^2 \right]} = C_P +
    \bigO(\log(n)^{1+\epsilon}n^{-1}) \), $\forall\epsilon>0$.
  \item\label{item:2} If $\alpha<1$, then
    \(
    \E_{P}{\left[ \Vert
        \hat{\phi}_n(\data_n^{-k}) - F_P \Vert_{P}^2 \right]} = C_P +
    \bigO(n^{-\alpha}) \).
  \end{enumerate}
\end{corollary}

Proposition~\ref{prop:oracle-prop} provided a finite-sample bound
on the average price we pay for using cross-validation, and
Corollary~\ref{cor:asymp-cons} states that this price vanishes
asymptotically, up to poly-logarithmic terms, provided that the size
of the library does not grow faster than at a polynomial rate in the
sample size. The situation \( C_P=0 \) corresponds to a setting in
which the library includes a consistent learner. Cases~\ref{item:1}
and~\ref{item:2} correspond to situations where the oracle learner
achieves a parametric or non-parametric asymptotic rate of
convergence, respectively.

To illustrate the content of Corollary~\ref{cor:asymp-cons}, consider
first a situation where we use a library with an increasing number of
Cox regression models. Each of these models will achieve a parametric
rate of convergence, to possibly different least-false cumulative
hazard functions \cite{hjort1992inference}, and hence
item~\ref{item:1} of Corollary~\ref{cor:asymp-cons} states that the
joint survival super learner based on this library will achieve a
near-parametric rate of convergence.  \( C_P \) can be set to be the
distance between the data-generating distribution and the least false
model in the library, and so the joint survival super learner will
approximate the least false model in the library at a near-parametric
rate. Another situation appears if we add more flexible models to the
library, such as Cox lasso or random survival forests.  These models
typically converge at slower rates, where the fastest achievable rate
depends on the data-generating distribution.  Item~\ref{item:2} of
Corollary~\ref{cor:asymp-cons} shows that the joint survival super
learner achieves the same convergence rate as the best-performing
learner in the library, without any knowledge of the data-generating
distribution.

\section{Numerical experiments}
\label{sec:numer-exper}

The numerical experiments have two aims. The first aim is to
demonstrate that the joint survival super learner can outperform the
IPCW based discrete super learners of \citep{gonzalez2021stacked}
which pre-specify a potentially misspecified model for the censoring
mechanism. The second aim is to show that the discrete joint survival
super learner can compete and outperform the ensemble super learner
proposed by \cite{westling2021inference}.

For the numerical experiments we have synthesized the prostate cancer
data of \cite{kattan2000pretreatment} by fitting a hierarchical
structural equation model under parametric assumptions. The outcome of
interest is the time from randomization until the combined endpoint
tumour recurrence or all-cause death. Five baseline covariates are
used to predict the outcome risk: prostate-specific antigen (PSA,
ng/mL), Gleason score sum (GSS, values between 6 and 10), radiation
dose (RD), hormone therapy (HT, yes/no) and clinical stage (CS, six
values). The study was designed such that a patient's radiation dose
depended on when the patient entered the study. This in turn implies
that the time of censoring depends on the radiation dose. The data
were re-analysed in \citep{gerds2013estimating} where a sensitivity
analysis was conducted based on simulated data. Here we use the same
simulation setup, where event and censoring times are generated
according to parametric Cox-Weibull models \citep{Bender2005}
estimated from the original data, and the covariates are generated
according to either marginal Gaussian normal or binomial distributions
estimated from the original data
\citep[c.f.,][Section~4.6]{gerds2013estimating}. We refer to this
simulation setting as `dependent censoring'. We also considered a
simulation setting where data were generated in the same way, except
that censoring was generated completely independently of the
covariates. We refer to this simulation setting as `independent
censoring'.

For all super learners, we use a library consisting of three learners:
The Nelson-Aalen estimator \citep{andersen2012statistical}, a Cox
regression model with additive effects of the covariates
\citep{cox1972regression}, and a random survival forest
\citep{ishwaran2008random}. We use the same library to learn the
cumulative hazard functions of the outcome and the censoring time,
respectively. Specifically, to obtain estimates of the cumulative
censoring hazard function we fit the learners to the modified data set
\(\{(\tilde{T}_i, 1-\Delta_i, X_i)\}_{i=1}^n \) where the roles of
censoring and outcome are exchanged.

We compare the joint survival super learner to two IPCW based super
learners: The first super learner, called IPCW(Cox), uses a Cox
regression model with additive effects of the five covariates to
estimate the censoring probabilities, while the second super learner,
called IPCW(KM), uses the Kaplan-Meier estimator to estimate the
censoring probabilities. The Cox model for the censoring distribution
is thus correctly specified in both simulation settings, while the
Kaplan-Meier estimator only estimates the censoring model correctly in
the simulation setting where censoring is independent. Both IPCW super
learners are fitted using the \texttt{R}-package
\texttt{riskRegression} \citep{Gerds_Ohlendorff_Ozenne_2023}.
%
%
The IPCW super learners use the integrated Brier score up to a fixed time
horizon (36 months). The marginal risk of the event before this time horizon is
\(\approx 24.6\)\%. Under the `dependent censoring' setting the marginal
censoring probability before the time horizon is \(\approx 61.9\)\%. Under the
`independent censoring' setting the marginal censoring probability before this
time horizon is \( \approx 38.7 \)\%.

Each super learner provides a learner for the cumulative hazard
function for the outcome of interest. From the cumulative hazard
function, we obtain a risk prediction model as described in
Section~\ref{sec:joint-survival-super-learner}, with the special case
of $\Lambda_2 = 0$. We measure the performance of each super learner
by calculating the index of prediction accuracy (IPA)
\citep{kattan2018index} at a fixed time horizon (36 months) for the
risk prediction model provided by the super learner. The IPA is 1
minus the ratio between the model's Brier score and the null model's
Brier score, where the null model is the model that does not use any
covariate information. The value of IPA is approximated using a large
(\( n = 20,000 \)) independent data set of uncensored data. As a
benchmark, we calculate the performance of the risk prediction model
chosen by the oracle selector, which has the highest IPA in the
simulation setting.

The results for the first aim are shown in
Figure~\ref{fig:ipcw-fail}. We see that in the scenario where
censoring depends on the covariates, using the Kaplan-Meier estimator
to estimate the censoring probabilities provides a risk prediction
model with an IPA that is lower than the risk prediction model
provided by the joint survival super learner. The performance of the
risk prediction model selected by the joint survival super learner is
similar to the risk prediction model selected by the IPCW(Cox) super
learner which a priori uses a correctly specified model for the
censoring distribution. Both these risk prediction models are close to
the performance of the oracle, except for small sample sizes.

\begin{figure}[ht]
  \includegraphics[width=13cm]{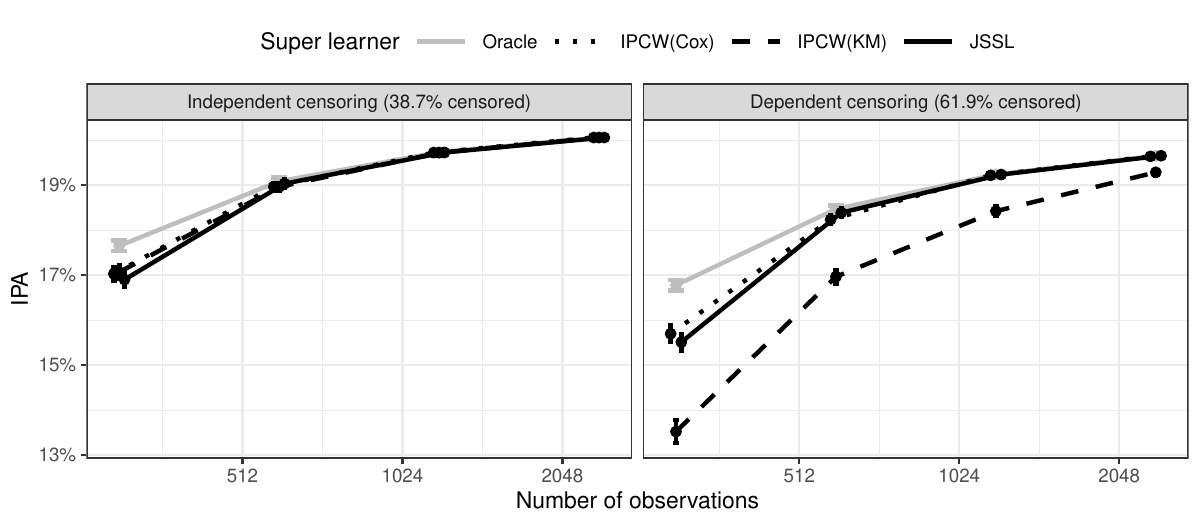}
  \caption{For the risk prediction models provided by each of the
    super learners, the IPA is plotted against sample size. The
    results are averages across 1000 simulated data sets and the error
    bars are used to quantify the Monte Carlo uncertainty. JSSL
    denotes the joint survival super learner. }
\label{fig:ipcw-fail}
\end{figure}
 
For the second aim, we consider the super learner {\it survSL}
proposed by \citep{westling2021inference} which like the joint
survival super learner does not require a pre-specified censoring
model. Both methods provide estimates of the event-time and censoring
distributions and hence we compare their performance with respect to
both the outcome and the censoring distribution. Again we use the IPA
to quantify the predictive performance.

The results for the second aim are shown in
Figures~\ref{fig:zelefski-out} and~\ref{fig:zelefski-cens}. We see
that for most sample sizes, the joint survival super learner has
similar or higher IPA compared to survSL with respect to both the
prediction of the censoring and the outcome risks. We note that the
advantage of the joint survival super learner in this particular
simulation setting might be due that it is a discrete super learner,
whereas survSL combines the learners.

\begin{figure}[ht]
  \includegraphics[width=13cm]{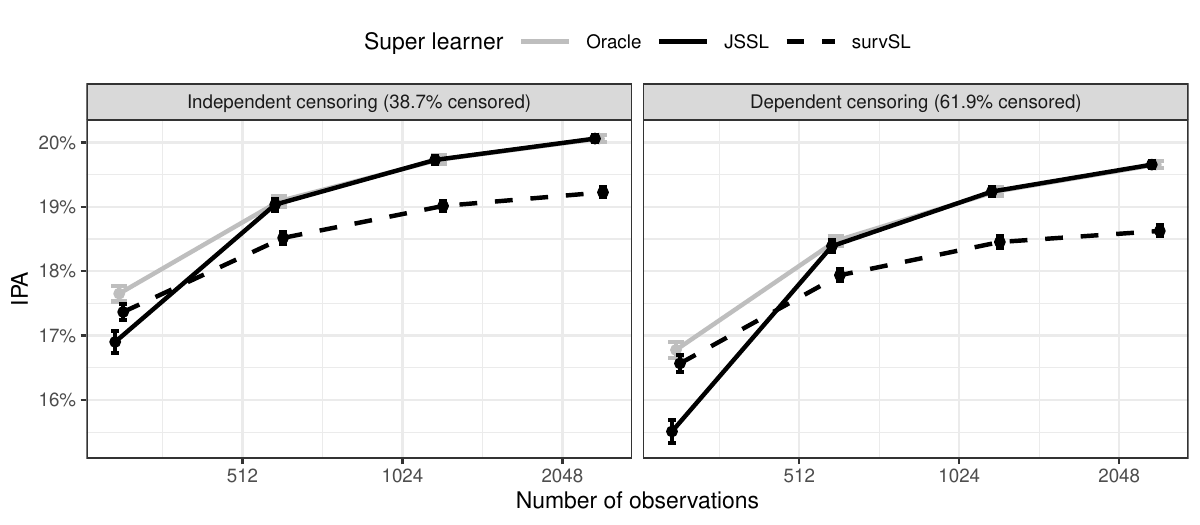}
\caption{For the risk prediction models of the outcome provided by
  each of the super learners, the IPA at the fixed time horizon is
  plotted against sample size. The results are averages across 1000
  repetitions and the error bars are used to quantify the Monte Carlo
  uncertainty. JSSL denotes the joint survival super learner.}
\label{fig:zelefski-out}
\end{figure}

\begin{figure}[ht]
  \begin{center}
      \includegraphics[width=10cm]{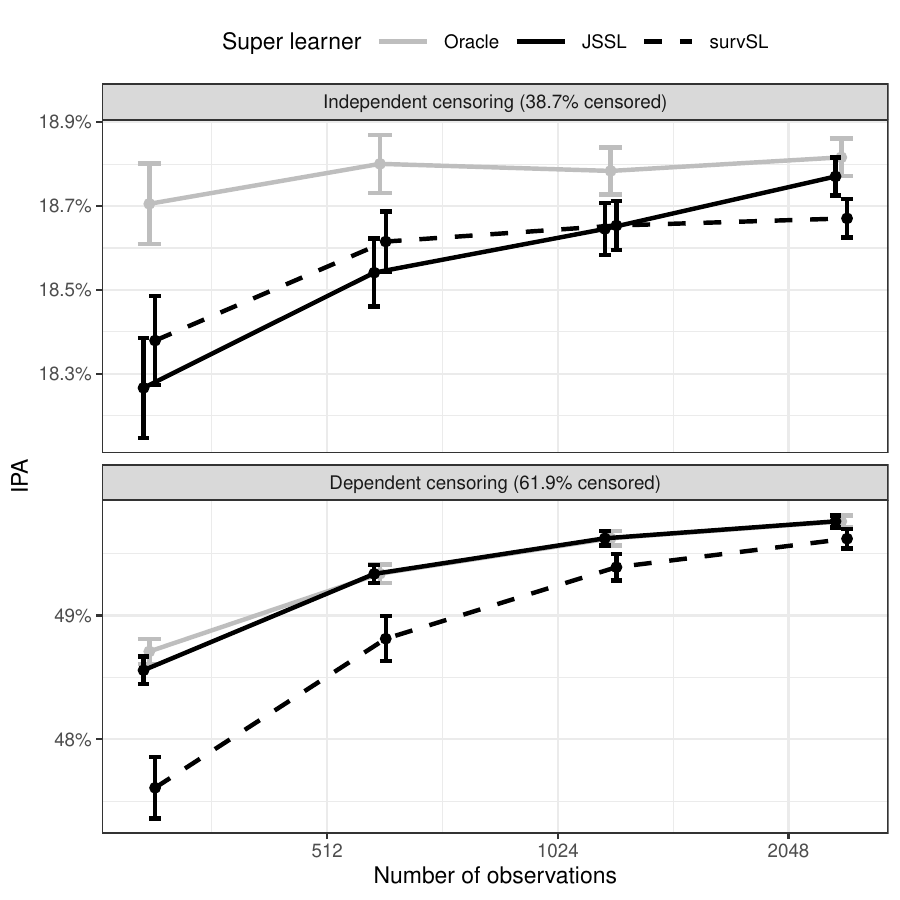}
  \end{center}
\caption{For the risk prediction models of the censoring model
    provided by each of the super learners, the IPA at the fixed time
    horizon is plotted against sample size. The results are averages
    across 1000 repetitions and the error bars are used to quantify
    the Monte Carlo uncertainty. JSSL denotes the joint survival super
    learner.}
\label{fig:zelefski-cens}
\end{figure}

\section{Prostate cancer study}
\label{sec:real-data-appl}

We use the prostate cancer data of \cite{kattan2000pretreatment} to
illustrate the use of the joint survival super learner in the presence
of competing risks. We have introduced the data in
Section~\ref{sec:numer-exper}. The data consist of 1,042 patients who
are followed from start of followup until tumour recurrence ($n=268$),
death without tumour recurrence ($n=29$), or censored ($n=745$),
whatever came first. We use the joint survival super learner to rank
libraries of learners for the cause-specific cumulative hazard
functions of tumour recurrence, death without tumour recurrence, and
censoring. The three libraries of learners each include five learners:
the Nelson-Aalen estimator, three Cox regression models (unpenalized,
lasso, elastic net) each including additive effects of the baseline
covariates, and a random survival forest. The Nelson-Aalen estimator
is estimated without covariates and serves as a benchmark model which
guarantees that the joint survival super learner is not worse than a
model which predicts the same probability to all individuals. The
other four learners use the five baseline covariates listed in
Section~\ref{sec:numer-exper} to predict the three cumulative hazard
functions of time to tumour recurrence \( \Lambda_1 \), time to death
without tumour recurrence \( \Lambda_2 \), and time to censoring
$\Gamma$. The resulting library consists of \( 5^3 = 125 \)
learners. We use five folds for training and testing the models,
repeat training and evaluation five times with different splits, and
obtain the discrete joint survival super learner as the combination
with the best average five-fold integrated Brier score across the five
repetitions. Table \ref{tab:1} shows integrated Brier scores for the
conditional state occupation probability function \( F \) as defined
in Section~\ref{sec:joint-survival-super-learner}, evaluated 3 years
after randomization for a selection of the 125 learners. We see that
among the five best combinations, the random survival forest is always
selected for \(\Gamma\) and that the differences for different
learners of \(\Lambda_1\) and \(\Lambda_2\) are small. To illustrate
the comparative performance of the discrete joint survival super
learner, we also split the data randomly into a training set with
\(n=658\) individuals and a test set with the remaining \(n=384\)
individuals.  We fit the discrete joint survival super learner (five
repetitions of five-fold cross-validation) and for comparison
pre-specified cause-specific Cox regression models to the training
set. In the training set, the discrete joint survival super learner
chooses the unpenalized Cox regression model for tumor recurrence, the
elastic net Cox regression for death without tumor recurrence, and the
random survival forest for censoring. Figure \ref{fig:zelefski-real}
compares the 3-year risk predictions from the pre-specified Cox model
and the discrete joint survival super learner in the test set.

\begin{table}[ht]
  \caption{The results of applying the 125 combinations of learners to
the prostate cancer data set. Shown are the best 5 combinations and
selected intermediate ranks.  The `Loss' is the integrated Brier
score evaluated at 3 years and `SD' is the standard deviation
across the five repetitions of five-fold cross-validation. The discrete
joint survival super learner chooses rank 1. }
\begin{center}  
\begin{tabular}{ l| c c c c c } 
  Rank&Cause 1&Cause 2&Censored&Loss&SD \\\hline
  $   1 $&elastic net&elastic net&random forest&$  7.0205 $&$ 0.030977 $ \\
  $   2 $&lasso&elastic net&random forest&$  7.0209 $&$ 0.031035 $ \\
  $   3 $&Cox&elastic net&random forest&$  7.0224 $&$ 0.030871 $ \\
  $   4 $&elastic net&lasso&random forest&$  7.0225 $&$ 0.030170 $ \\
  $   5 $&lasso&lasso&random forest&$  7.0228 $&$ 0.030237 $ \\
  $  25 $&random forest&random forest&lasso&$  7.3845 $&$ 0.026975 $ \\
  $  50 $&elastic net&random forest&lasso&$  7.3974 $&$ 0.021290 $ \\
  $  75 $&lasso&Cox&lasso&$  7.4059 $&$ 0.024660 $ \\
  $ 100 $&Nelson-Aalen&Cox&elastic net&$  7.8300 $&$ 0.016719 $ \\
  $ 125 $&Nelson-Aalen&Nelson-Aalen&Nelson-Aalen&$ 10.3298 $&$ 0.003289 $ \\
\end{tabular}\label{tab:1}
\end{center}
\end{table}

\begin{figure}[ht]
  \includegraphics[width=14cm]{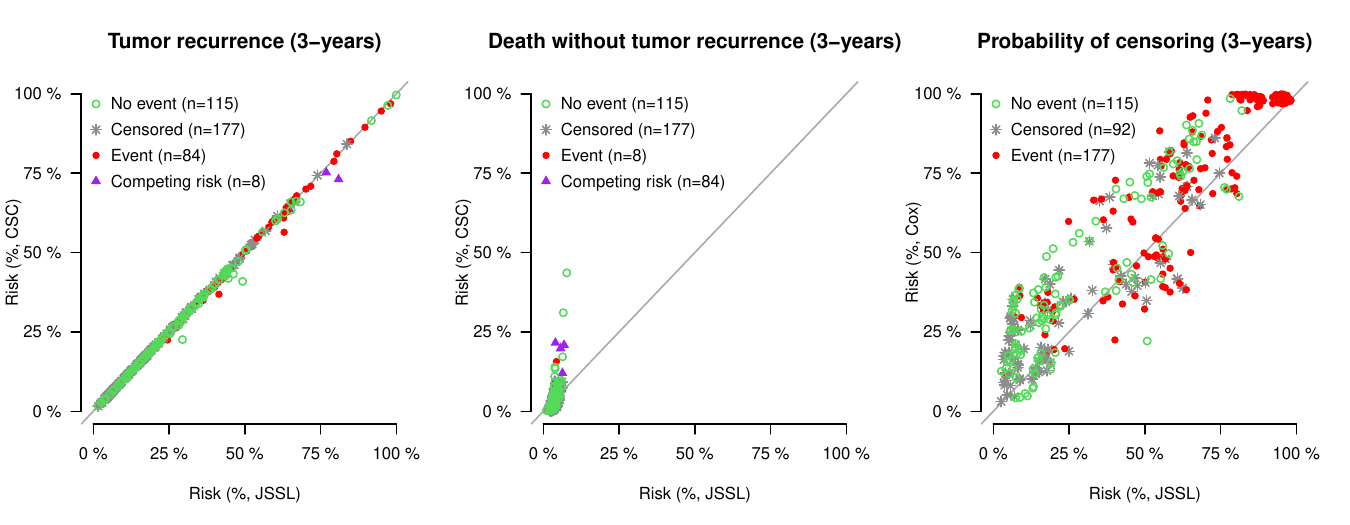}
 \caption{Comparison of risk predictions of the discrete joint
   survival super learner and pre-specified Cox regressions in an
   independent test data set.}
\label{fig:zelefski-real}
\end{figure}

\section{Discussion}
\label{sec:discussion}

A major advantage of the joint survival super learner is that the
performance of each combination of learners can be estimated without
additional nuisance parameters. A potential drawback of our approach
is that we are evaluating the loss of the learners on the level of the
observed data distribution, while the target of the analysis is either
the event-time distribution, or the censoring distribution, or both.

A relevant application of the joint survival super learner is within
the framework of targeted learning \citep{van2011targeted}, also known
as debiased machine learning \citep{chernozhukov2018double}, -- a
general methodology that combines flexible, data-adaptive estimation
of nuisance parameters with asymptotically valid inference for
low-dimensional target parameters.  For example, the methods in
\cite{van2003unified} and \cite{rytgaard2022targeted} target the
average treatment effect in a survival setting, which require
estimates of the cause-specific cumulative hazard functions and the
censoring cumulative hazard function. The joint survival super learner
can be used to estimate these nuisance parameters and is more
generally well suited for targeted and debiased machine learning with
right-censored data.

We have focused on a discrete version of the joint survival super
learner, but it is of interest to extend the method to a proper
ensemble learner, where learners are combined, e.g., through
stacking. How an ensemble should be build for tuples of learners is an
interesting topic for future research.

\appendix

\section*{Appendix}

Define
\( \bar{B}_{\tau,P}(F, o) = \bar{B}_{\tau}(F, o) -
\bar{B}_{\tau}(F_P, o) \) and
\( R_{P}(F) = \E_P{[\bar{B}_{\tau,P}(F, O)]} \), where the
integrated Brier score \( \bar{B}_{\tau} \) was defined in
Section~\ref{sec:joint-survival-super-learner}. Recall the norm
\( \Vert \blank \Vert_{P}\) defined in
equation~(\ref{eq:norm}).

\begin{lemma}
  \label{lemma:norm}
  \( R_{P}(F) = \Vert F - F_P \Vert_{P}^2 \).
\end{lemma}
\begin{proof}
  For any \( t \in [0, \tau] \) and \( l\in \{-1,0,1,2\} \) we have
  \begin{align*}
    & \E_{P}{\left[ (F(t, l, X) - \1{\{\eta(t) = l \}})^2 \right]}
    \\
    & =    \E_{P}{\left[ (F(t, l, X) - F_P(t, l, X) + F_P(t, l, X) - \1{\{\eta(t) = l
      \}})^2 \right]}
    \\
    & =    \E_{P}{\left[ (F(t, l, X) - F_P(t, l, X))^2\right]}
      + \E_{P}{\left[ (F_P(t, l, X) - \1{\{\eta(t) = l \}})^2\right]}
    \\
    & \quad
      + 2\E_{P}{\left[ (F(t, l, X) - F_P(t, l, X))(F_P(t, l, X) - \1{\{\eta(t) = l
      \}})\right]}
    \\
    & =    \E_{P}{\left[ (F(t, l, X) - F_P(t, l, X))^2\right]}
      + \E_{P}{\left[ (F_P(t, l, X) - \1{\{\eta(t) = l \}})^2\right]},
  \end{align*}
  where the last equality follows from the tower property. Hence, using Fubini,
  we have
  \begin{equation*}
    \E_P{[\bar{B}_{\tau}(F, O)]}
    = \Vert F - F_P \Vert_{P}^2 + \E_P{[\bar{B}_{\tau}(F_P, O)]}.
  \end{equation*}
\end{proof}

\begin{proof}[of Proposition~\ref{prop:stric-prop}]
  The result follows from Lemma~\ref{lemma:norm}.
\end{proof}

Recall that we use \( \Phi_n \) to denote a library of learners for the
function \( F \), and that \( \hat{\phi} \) and \( \tilde{\phi} \) denotes,
respectively, the discrete super learner and the oracle learner for the library
\( \Phi_n \), c.f., Section~\ref{sec:joint-survival-super-learner}.

\begin{proof}[of Proposition~\ref{prop:oracle-prop}]
  Minimizing the loss \( \bar{B}_{\tau} \) is equivalent to
  minimizing the loss \( \bar{B}_{\tau,P} \), so the discrete super learner and
  oracle according to \( \bar{B}_{\tau} \) and \( \bar{B}_{\tau,P} \) are
  identical. By Lemma~\ref{lemma:norm}, \( R_P(F) \geq 0 \) for any
  \( F \in \mathcal{F}_{\mathcal{P}} \), and so using Theorem 2.3 from
  \citep{vaart2006oracle} with \( p=1 \), we have that for all \( \delta >0 \),
\begin{align*}
  & \frac{1}{K} \sum_{k=1}^{K} \E_{P}{\left[ R_P(\hat{\phi}_n(\data_n^{-k})) \right]}
  \\
  &  \quad \leq
    (1+2\delta)\frac{1}{K} \sum_{k=1}^{K}\E_{P}{\left[ R_P(\tilde{\phi}_n(\data_n^{-k})) \right]}
  \\
  & \qquad + (1+\delta) \frac{16 K}{n}
    \log(1 + |\Phi_n|)\sup_{F \in \mathcal{F}_{\mathcal{P}}}
    \left\{
    M(F) + \frac{v(F)}{R_P(F)}
    \left(
    \frac{1}{\delta} + 1
    \right)
    \right\}
\end{align*}
where for each \( F \in \mathcal{F}_{\mathcal{P}} \),
\( (M(F), v(F)) \) is some Bernstein pair for the function
\(o \mapsto \bar{B}_{\tau,P}(F, o) \). As
\( \bar{B}_{\tau,P}(F, \blank) \) is uniformly bounded by \( \tau \)
for any \( F \in \mathcal{F}_{\mathcal{P}} \), it follows from section
8.1 in \citep{vaart2006oracle} that
\( (\tau, 1.5 \E_P{[\bar{B}_{\tau,P}(F, O)^2]}) \) is a Bernstein
pair for \( \bar{B}_{\tau,P}(F, \blank) \). Now, for any
\( a,b,c \in \R \) we have
\begin{align*}
  (a-c)^2 - (b-c)^2
  & = (a-b+b-c)^2 - (b-c)^2
  \\
  & = (a-b)^2 + (b-c)^2 +2(b-c)(a-b) - (b-c)^2
  \\
  & = (a-b)
    \left\{
    (a-b) +  2(b-c)
    \right\}
  \\
  & = (a-b)
    \left\{
     a + b -2c
    \right\},
\end{align*}
so using this with \( a=F(t, l, x) \), \( b=F_P(t, l, x) \), and
\( c = \1{\{\eta(t) = l\}} \), we have by Jensen's inequality
\begin{align*}
  & \E_P{[\bar{B}_{\tau,P}(F, O)^2]}
  \\
  & \leq
    2\tau\E_{P}{\left[
    \sum_{l=-1}^{2} \int_0^{\tau}
    \left\{
    \left(
    F(t, l, X) - \1{\{\eta(t) = l\}}
    \right)^2
    -
    \left(
    F_P(t, l, X) - \1{\{\eta(t) = l\}}
    \right)^2
    \right\}^2
    \diff t 
    \right]}
  \\
  & =2\tau
    \E_{P}\Bigg[
    \sum_{l=-1}^{2} \int_0^{\tau}
    \left(
    F(t, l, X) - F_P(t, l, X)
    \right)^2
  \\
  & \quad \quad \quad\quad \quad \quad \times
    \left\{
    F(t, l, X) +  F_P(t, l, X)-2 \1{\{\eta(t) = l\}}
    \right\}^2
    \diff t 
    \Bigg]
  \\
  & \leq
    8\tau \E_{P}{\left[
    \sum_{l=-1}^{2} \int_0^{\tau}
    \left(
    F(t, l, X) - F_P(t, l, X)
    \right)^2
    \diff t 
    \right]}.
  \\
  & =
    8\tau \Vert F - F_P \Vert_{P}^2.
\end{align*}
Thus when \( v(F) = 1.5 \E_P{[\bar{B}_{\tau,P}(F, O)^2]} \) we have by
Lemma~\ref{lemma:norm}
\begin{equation*}
  \frac{v(F)}{R_P(F)}
  = 1.5 \frac{\E_P{[\bar{B}_{\tau,P}(F, O)^2]}}{\E_P{[\bar{B}_{\tau,P}(F, O)]}}
  \leq 12 \tau,
\end{equation*}
and so using the Bernstein pairs \( (\tau, 1.5 \E_P{[\bar{B}_{\tau,P}(F, O)^2]}) \) we have
\begin{equation*}
  \sup_{F \in \mathcal{F}_{\mathcal{P}}}
  \left\{
    M(F) + \frac{v(F)}{R_P(F)}
    \left(
      \frac{1}{\delta} + 1
    \right)
  \right\}
  \leq \tau
  \left(
    13 + \frac{12}{\delta}
  \right).
\end{equation*}
For all $\delta>0$ we thus have
\begin{align*}
  \frac{1}{K} \sum_{k=1}^{K} \E_{P}{\left[ R_P(\hat{\phi}_n(\data_n^{-k})) \right]}
  \leq
  &(1+2\delta)\frac{1}{K} \sum_{k=1}^{K}\E_{P}{\left[ R_P(\tilde{\phi}_n(\data_n^{-k})) \right]}
  \\
  & \quad
    + (1+\delta)\log(1 + |\Phi_n|) \tau \frac{16 K}{n}
    \left(
    13 + \frac{12}{\delta}
    \right),
\end{align*}
which is equivalent to
\begin{align*}
   \E_{P}{\left[ R_P(\hat{\phi}_n(\data_n^{-k})) \right]}
  \leq
  &(1+2\delta) \E_{P}{\left[ R_P(\tilde{\phi}_n(\data_n^{-k})) \right]}
    \quad
    + (1+\delta)\log(1 + |\Phi_n|) \tau \frac{16 K}{n}
    \left(
    13 + \frac{12}{\delta}
    \right),
\end{align*}
for any \( k\in \{1, \dots, K\} \), because we have assumed that
\( |\data_n^{-k}| = n/K \) for all \( k \), and hence the expectations
on the right- and left-hand side above do not depend on \( k \). The
final result then follows from Lemma~\ref{lemma:norm}.
\end{proof}

\begin{proof}[of Corollary~\ref{cor:asymp-cons}]
  By definition of the oracle and Lemma~\ref{lemma:norm},
  \begin{equation*}
    \frac{1}{K} \sum_{k=1}^{K} \E_{P}{\left[ \Vert \tilde{\phi}_n(\data_n^{-k}) - F_P \Vert_{P}^2
      \right]} \leq
    \frac{1}{K} \sum_{k=1}^{K}\E_{P}{\left[ \Vert
        \phi_n(\data_n^{-k}) - F_P \Vert_{P}^2
      \right]}
    =
    \E_{P}{\left[ \Vert \phi_n(\data_n^{-k}) - F_P \Vert_{P}^2
      \right]},
  \end{equation*}
  for all \( n \in \N \), where the last equality follows because all
  the training sets \( \data_n^{-k} \) have the same distribution. The
  result then follows from Proposition~\ref{prop:oracle-prop} by
  letting $\delta$ grow to zero with \( n \), for instance as
  $\delta_n = \log(n)^{-\epsilon}$ for some $\epsilon>0$.
\end{proof}

\bibliography{bib.bib}

\end{document}